\newtheorem{theorem}{Theorem}
\newtheorem{proposition}{Proposition}
\newtheorem{example}{Example}
\newtheorem{definition}{Definition}
\begin{document}

\title{Minimum Degree-Weighted Distance Decoding for Polynomial Residue Codes with Non-Pairwise Coprime Moduli}

\author{\mbox{Li Xiao \,\, and  \,\,  Xiang-Gen Xia}
\thanks{The authors are with Department of
Electrical and Computer Engineering,
University of Delaware, Newark, DE 19716, U.S.A. (e-mail:
\{lixiao, xxia\}@ee.udel.edu).}
}
\maketitle

\begin{abstract}
This paper presents a new decoding for polynomial residue codes, called the minimum degree-weighted distance decoding. The newly proposed  decoding is based on the degree-weighted distance and different from the traditional minimum Hamming distance decoding. It is shown that for the two types of minimum distance decoders, i.e., the minimum degree-weighted distance decoding and the minimum Hamming distance decoding, one is not absolutely stronger than the other, but they can complement each other from different points of view.

\end{abstract}
\begin{IEEEkeywords}
Chinese remainder theorem, error correction, Hamming distance, polynomial residue codes.
\end{IEEEkeywords}

\section{Introduction}\label{sec1}
Polynomial residue codes are a large class of linear codes. Some well-known codes, such as BCH codes, Reed-Solomon codes and Goppa codes \cite{poly4,poly44,goppa}, can be derived from polynomial residue codes. A polynomial residue code with moduli $m_1(x),\cdots,m_L(x)$ encodes a message $a(x)$ as the vector $\mathbf{a}=(a_1(x),\cdots,a_L(x))$, where $\mathbf{a}$ is called a codeword (or residue vector), and $a_i(x)$ are residues of $a(x)$ modulo $m_i(x)$ for $1\leq i\leq L$. The Hamming weight of a codeword in a polynomial residue code is the number of residues that are nonzero, the Hamming distance between two codewords is the number of residues in which they differ, and a polynomial residue code with the minimum Hamming distance $d$ can correct up to $\lfloor(d-1)/2\rfloor$ errors in the residues, where $\lfloor\cdot\rfloor$ denotes the floor function.
In general, polynomial residue codes can be classified into two categories: codes with pairwise coprime moduli \cite{poly1} and codes with non-pairwise coprime moduli \cite{sundaram}.
In a polynomial residue code with pairwise coprime moduli $m_1(x),\cdots,m_N(x),\cdots,m_L(x)$, codewords are
residue vectors of all polynomials with degrees less than that of $\prod_{i=1}^{N}m_i(x)$, where $\mbox{deg}(m_1(x))\leq\cdots\leq\mbox{deg}(m_L(x))$, and the first $N$ moduli form a set of nonredundant moduli, while the last $L-N$ moduli form a set of redundant moduli that facilitates residue error correction. In a polynomial residue code with non-pairwise coprime moduli $m_1(x),\cdots,m_L(x)$, codewords are residue vectors of all polynomials with degrees less than that of the least common multiple (lcm) of all the moduli. Compared with a polynomial residue code with pairwise coprime moduli, a polynomial residue code with non-pairwise coprime moduli has more efficient distributed error decoding and simpler error correction algorithm at the cost of an increase in redundancy. Over the past few decades, polynomial residue codes have been extensively investigated due to their ability of fault-tolerance in polynomial-type computing required for signal processing tasks such as cyclic convolution and FFT computations \cite{CRT2,sundaram,poly1,mura,why,haining,xiaoxia2}. Correspondingly, there is also a great amount of work on integer residue codes \cite{bm,lly2-2002,lly3-2002,vgoh2008,chang,hanshen}.

Recently, different from the Hamming distance, another type of distance called degree-weighted distance for polynomial residue codes is defined, and accordingly, a coding framework based on the degree-weighted distance has been developed for polynomial residue codes with pairwise coprime moduli in \cite{polyyu,polyyu2}.
In this paper, with regard to this degree-weighted distance, we naturally study polynomial residue codes with non-pairwise coprime moduli. We derive the error correction capabilities of these codes and also propose the minimum degree-weighted distance decoding algorithm. Moreover, we give two simple examples to show that for the two types of minimum distance decoders for polynomial residue codes with non-pairwise coprime moduli (i.e., the minimum Hamming distance decoding proposed in \cite{sundaram} and the minimum degree-weighted distance decoding proposed in this paper), one is not absolutely stronger than the other, but they can complement each other from different points of view.

\textit{Notations}: Define by $\mathbb{F}[x]$ the set of all polynomials with coefficients in $\mathbb{F}$, where $x$ is an indeterminate and $\mathbb{F}$ is a field. The highest power of $x$ in a polynomial $f(x)$ is the degree of the polynomial, denoted by $\mbox{deg}\left(f(x)\right)$. All the elements of $\mathbb{F}$ are termed scalars and can be expressed as polynomials of degree $0$. A polynomial is called monic if the coefficient of the highest power of $x$ is $1$, and irreducible if it has only a scalar and itself as its factors. The residue of $f(x)$ modulo $g(x)$ is denoted by $\left| f(x)\right|_{g(x)}$. For a polynomial set $\mathcal{F}=\{f_1(x),\cdots,f_L(x)\}$, we define $\mbox{deg}(\mathcal{F})=\sum_{i=1}^{L}\mbox{deg}(f_i(x))$, and denote the cardinality of $\mathcal{F}$, the greatest common divisor (gcd) and lcm of all the polynomials in $\mathcal{F}$ by $\#(\mathcal{F})$, $\mbox{gcd}(\mathcal{F})$ and
$\mbox{lcm}\left(\mathcal{F}\right)$, respectively. Both $\mbox{gcd}(\cdot)$ and $\mbox{lcm}(\cdot)$ are taken to be monic for the uniqueness. Two polynomials are said to be coprime if their $\mbox{gcd}$ is $1$. Throughout the paper, all polynomials considered are in $\mathbb{F}[x]$, and $\lfloor\cdot\rfloor$ and $\lceil\cdot\rceil$ stand for the floor and ceiling functions.

\section{Preliminaries}\label{sec2}
Let $\mathcal{M}=\{m_1(x),\cdots,m_L(x)\}$ be a set of $L$ non-pairwise coprime moduli, and $M(x)$ be the lcm of all the moduli, i.e., $M(x)=\mbox{lcm}(\mathcal{M})$. We can represent a polynomial $a(x)$ with $\mbox{deg}(a(x))<\mbox{deg}(M(x))$ by its residue vector $\mathbf{a}=(a_1(x),\cdots,a_L(x))$, where $a_i(x)=|a(x)|_{m_i(x)}$ or $a_i(x)\equiv a(x)\mod m_i(x)$, i.e., there exists $k_i(x)$ such that
\begin{equation}\label{folding}
a(x)=k_i(x)m_i(x)+a_i(x)
\end{equation}
with $\mbox{deg}\left(a_i(x)\right)<\mbox{deg}\left(m_i(x)\right)$. Define a set of $L$ pairwise coprime monic polynomials $\{\mu_i(x)\}_{i=1}^{L}$ such that $\prod_{i=1}^{L}\mu_i(x)=M(x)$ and $\mu_i(x)$ divides $m_i(x)$ for each $i, 1\leq i\leq L$.
Then, $a(x)$ can be uniquely reconstructed from its residue vector via the generalized Chinese remainder theorem (CRT) for polynomials \cite{sundaram} as follows:
\begin{equation}\label{crtpoly}
a(x)=\left|\sum_{i=1}^{L}a_i(x)D_i(x)M_i(x)\right|_{M(x)},
\end{equation}
where $M_i(x)=M(x)/\mu_i(x)$, $D_i(x)$ is the modular multiplicative inverse of $M_i(x)$ with respect to $\mu_i(x)$, if $\mu_i(x)\neq 1$, else $D_i(x)=0$. Note that if moduli $m_i(x)$ are pairwise coprime, we have $\mu_i(x)=m_i(x)$, and the above (\ref{crtpoly}) reduces to the standard CRT for polynomials.
Therefore, polynomials with degrees less than $\mbox{deg}\left(M(x)\right)$ and their residue vectors are isomorphic.

\begin{definition}\label{def1}
A polynomial residue code with non-pairwise coprime moduli $\mathcal{M}$ has a message space $\mathcal{S}=\{a(x):a(x)\in\mathbb{F}[x]\mbox{ with }\mbox{deg}(a(x))<\mbox{deg}(M(x))\}$ and consists of residue vectors (called codewords) of all polynomials in $\mathcal{S}$.
\end{definition}

As mentioned before, the redundancy in a polynomial residue code with $L$ pairwise coprime moduli is introduced by the $L-N$ redundant moduli, since a message has degree less than that of the product (or lcm) of the $N$ nonredundant moduli. However, in a polynomial residue code with $L$ non-pairwise coprime moduli, the redundancy is introduced by the common factors of all pairs of moduli, since a message has degree less than that of the lcm of all the moduli, and the degree of the lcm of all the moduli is less than that of the product of all the moduli.

According to $a_i(x)\equiv a(x)\mod m_i(x)$ and $a_j(x)\equiv a(x)\mod m_j(x)$, it is not hard to see that
\begin{equation}\label{check}
a_i(x)\equiv a_j(x)\mod d_{ij}(x),
\end{equation}
where $d_{ij}(x)=\mbox{gcd}(m_i(x),m_j(x))$. We call (\ref{check}) a consistency check between residues $a_i(x)$ and $a_j(x)$. If (\ref{check}) holds, $a_i(x)$ is said to be consistent with $a_j(x)$; otherwise, $a_i(x)$ and $a_j(x)$ appear in a failed consistency check. So, all pairs of residues in a residue vector $\mathbf{a}$ are consistent. If $t$ errors with values $e_{i_1}(x),\cdots,e_{i_t}(x)$ for $\{i_1,\cdots,i_t\}\subset\{1,\cdots,L\}$ have occurred in the transmission, then the received residues will be given by, for $1\leq i\leq L$,
\begin{equation}
\tilde{a}_i(x) =
  \begin{cases}
    a_i(x)+e_i(x), & \quad \mbox{if } i\in\{i_1,\cdots,i_t\}\\
    a_i(x),  & \quad \mbox{otherwise.}\\
  \end{cases}
\end{equation}
The residue errors $e_i(x)$ satisfy $\mbox{deg}(e_i(x))<\mbox{deg}(m_i(x))$. In the following, let us review the minimum Hamming distance, denoted by $d_{minH}$, and a simple residue error correction algorithm for a polynomial residue code with non-pairwise coprime moduli presented in \cite{sundaram}.

\begin{proposition}\cite{sundaram}\label{p1}
For a polynomial residue code in Definition \ref{def1}, write $M(x)$ in the form
\begin{equation}\label{fenie}
M(x)=p_1(x)^{t_1}p_2(x)^{t_2}\cdots p_K(x)^{t_K},
\end{equation}
where $p_i(x)$ are pairwise coprime, monic and irreducible polynomials, and $t_i$ are positive integers. For each $i$, $1\leq i\leq K$, let $d_i$ represent the number of moduli that contain the factor $p_i(x)^{t_i}$. Then, the minimum Hamming distance of the code is $d_{minH}=\min\{d_1,\cdots,d_K\}$.
\end{proposition}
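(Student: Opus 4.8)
The plan is to exploit linearity: the residue map $a(x)\mapsto(|a(x)|_{m_1(x)},\dots,|a(x)|_{m_L(x)})$ is $\mathbb{F}$-linear and $\mathcal{S}$ is an $\mathbb{F}$-vector space, so the code is linear and $d_{minH}$ equals the smallest Hamming weight of a codeword coming from a nonzero $a(x)\in\mathcal{S}$, i.e., the smallest possible number of nonzero residues $|a(x)|_{m_j(x)}$. For each modulus I would write $m_j(x)=\prod_{k=1}^{K}p_k(x)^{s_{jk}}$ with $0\le s_{jk}\le t_k$ (possible since $m_j(x)\mid M(x)$), and note that $|a(x)|_{m_j(x)}=0$ iff $m_j(x)\mid a(x)$, while by definition $d_i=\#\{j:s_{ji}=t_i\}$. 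I would then establish the two matching inequalities.

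For $d_{minH}\le\min_i d_i$, fix $i$ and consider the message $a(x)=M(x)/p_i(x)$, which lies in $\mathcal{S}$ because $\mbox{deg}(p_i(x))\ge 1$ gives $\mbox{deg}(a(x))=\mbox{deg}(M(x))-\mbox{deg}(p_i(x))<\mbox{deg}(M(x))$. Comparing the exponent of each irreducible factor, $m_j(x)\mid M(x)/p_i(x)$ holds iff $s_{ji}\le t_i-1$, i.e., iff $m_j(x)$ does \emph{not} contain the full factor $p_i(x)^{t_i}$; hence exactly $L-d_i$ residues vanish and exactly $d_i$ are nonzero. Since the exponent of $p_i(x)$ in $M(x)=\mbox{lcm}(\mathcal{M})$ equals $\max_j s_{ji}=t_i$, we have $d_i\ge 1$, so this is a nonzero codeword of Hamming weight $d_i$, giving $d_{minH}\le d_i$; minimizing over $i$ yields the bound.

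For $d_{minH}\ge\min_i d_i$, let $a(x)\in\mathcal{S}$ be nonzero and set $S=\{j:|a(x)|_{m_j(x)}=0\}=\{j:m_j(x)\mid a(x)\}$, so that $\mbox{lcm}(\{m_j(x):j\in S\})\mid a(x)$. A nonzero polynomial of degree below $\mbox{deg}(M(x))$ cannot be divisible by $M(x)$, so this lcm is a proper divisor of $M(x)$; since the exponent of $p_k(x)$ in an lcm is $\max_{j\in S}s_{jk}$, there is an $i$ with $\max_{j\in S}s_{ji}<t_i$, that is, $s_{ji}<t_i$ for every $j\in S$. Thus $S$ contains none of the $d_i$ indices $j$ with $s_{ji}=t_i$, so $\#(S)\le L-d_i$ and the Hamming weight of the codeword, $L-\#(S)$, is at least $d_i\ge\min_i d_i$. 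As $a(x)$ was arbitrary, $d_{minH}\ge\min_i d_i$, and combined with the upper bound this gives $d_{minH}=\min\{d_1,\dots,d_K\}$.

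The computations here are essentially exponent bookkeeping; the only point that calls for care is the equivalence used in the lower bound between ``the lcm of a subfamily of the moduli is a proper divisor of $M(x)$'' and ``some prime power $p_i(x)^{t_i}$ divides no modulus in that subfamily,'' which rests on reading off the exponent of each irreducible factor in an lcm as a maximum, together with the fact that no nonzero polynomial of degree less than $\mbox{deg}(M(x))$ can be a multiple of $M(x)$.
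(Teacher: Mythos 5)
Your proof is correct. One thing to be aware of: the paper itself gives no proof of Proposition~\ref{p1} --- it is quoted from \cite{sundaram} --- so there is no in-paper argument to match line by line. The closest internal analogue is the proof of Theorem~\ref{theo1}, the degree-weighted version of the same statement, and your argument has exactly that proof's two-sided structure: for the lower bound, the set of moduli dividing a nonzero $a(x)\in\mathcal{S}$ has an lcm that divides $a(x)$ and therefore cannot equal $M(x)$, so that set must avoid all moduli containing some $p_i(x)^{t_i}$, which bounds the number of vanishing residues; for the upper bound, an explicit nonzero codeword supported exactly on the $d_i$ moduli containing $p_i(x)^{t_i}$. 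The only (immaterial) difference is your choice of witness $M(x)/p_i(x)$ where Theorem~\ref{theo1} uses $\mbox{lcm}(\mathcal{M}\backslash\mathcal{M}_i)$; both have $p_i(x)$-exponent at most $t_i-1$ while being divisible by every modulus outside $\mathcal{M}_i$, hence both are supported exactly on $\mathcal{M}_i$ and have Hamming weight $d_i$. Your exponent bookkeeping, including the remark that $d_i\geq 1$ because the lcm attains the exponent $t_i$, is sound, and the empty-support case in the lower bound is harmless since then the weight is $L\geq d_i$.
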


\begin{proposition}\cite{sundaram}\label{p2}
For a polynomial residue code in Definition \ref{def1}, the lcm of any $L-(d_{minH}-1)$ moduli equals $M(x)$. Moreover,
if only $t\leq\lfloor(d_{minH}-1)/2\rfloor$ errors have occurred in the residues, each erroneous residue appears in at least $\lceil(d_{minH}-1)/2\rceil+1$ failed consistency checks, and each correct residue appears in at most $\lfloor(d_{minH}-1)/2\rfloor$ failed consistency checks.
\end{proposition}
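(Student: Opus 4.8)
The plan is to prove the two assertions of the proposition in succession, the first serving as a lemma for the second. \emph{For the lcm identity}, I would fix an arbitrary $\mathcal{M}'\subseteq\mathcal{M}$ with $\#(\mathcal{M}')=L-(d_{minH}-1)$, i.e., view it as deleting $d_{minH}-1$ of the moduli, and then argue prime power by prime power using the factorization $M(x)=p_1(x)^{t_1}\cdots p_K(x)^{t_K}$ of Proposition \ref{p1}. For each $i$, exactly $d_i\geq d_{minH}$ of the original moduli are divisible by $p_i(x)^{t_i}$ (none by a higher power of $p_i(x)$, since $M(x)$ is their lcm), so at least $d_i-(d_{minH}-1)\geq1$ of the moduli in $\mathcal{M}'$ remain divisible by $p_i(x)^{t_i}$; hence $p_i(x)^{t_i}\mid\mbox{lcm}(\mathcal{M}')$ for every $i$, and as the $p_i(x)^{t_i}$ are pairwise coprime, $M(x)\mid\mbox{lcm}(\mathcal{M}')$. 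The reverse divisibility holds because each modulus divides $M(x)=\mbox{lcm}(\mathcal{M})$, so the two monic polynomials coincide.

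\emph{For the second assertion}, suppose $t\leq\lfloor(d_{minH}-1)/2\rfloor$ errors occur, with error locations $E$ and correct locations $C=\{1,\dots,L\}\setminus E$. Any two correct residues are residues of the same $a(x)$, so they pass the consistency check (\ref{check}); thus a correct residue can fail a check only against one of the $t$ erroneous residues, and it appears in at most $t\leq\lfloor(d_{minH}-1)/2\rfloor$ failed checks, which is the correct-residue half.

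\emph{For the erroneous-residue half}, fix $i_k\in E$, write $\tilde a_{i_k}(x)=a_{i_k}(x)+e_{i_k}(x)$ with $e_{i_k}(x)\neq0$, and let $T\subseteq C$ be the set of correct locations whose residues pass the consistency check with $\tilde a_{i_k}(x)$. For $j\in T$ we have $\tilde a_{i_k}(x)\equiv a_j(x)\equiv a(x)\mod d_{i_k,j}(x)$; taking the lcm of these congruences over $j\in T$ and using the distributive law $\mbox{gcd}(f,\mbox{lcm}(g,h))=\mbox{lcm}(\mbox{gcd}(f,g),\mbox{gcd}(f,h))$ in $\mathbb{F}[x]$ gives $\tilde a_{i_k}(x)\equiv a(x)\mod g(x)$ with $g(x)=\mbox{gcd}\big(m_{i_k}(x),\,\mbox{lcm}(\{m_j(x):j\in T\})\big)$. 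This is exactly the solvability condition of the two-modulus generalized CRT, so there is a polynomial $b(x)$ of degree less than $\mbox{deg}\big(\mbox{lcm}(\{m_j(x):j\in T\}\cup\{m_{i_k}(x)\})\big)\leq\mbox{deg}(M(x))$ with $b(x)\equiv\tilde a_{i_k}(x)\mod m_{i_k}(x)$ and $b(x)\equiv a(x)\mod m_j(x)$ for all $j\in T$. Then $\mbox{lcm}(\{m_j(x):j\in T\})$ divides $b(x)-a(x)$, whose degree is below $\mbox{deg}(M(x))$. If $\#(T)\geq L-(d_{minH}-1)$, the lcm identity forces $\mbox{lcm}(\{m_j(x):j\in T\})=M(x)$ and hence $b(x)=a(x)$, contradicting $b(x)\equiv\tilde a_{i_k}(x)\not\equiv a_{i_k}(x)\mod m_{i_k}(x)$ (the inequality since $0\neq e_{i_k}(x)$ has degree less than that of $m_{i_k}(x)$). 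So $\#(T)\leq L-d_{minH}$, and $\tilde a_{i_k}(x)$ fails the check against at least $\#(C)-\#(T)\geq(L-t)-(L-d_{minH})=d_{minH}-t$ correct residues; since $t\leq\lfloor(d_{minH}-1)/2\rfloor$ one checks $d_{minH}-t\geq\lceil(d_{minH}-1)/2\rceil+1$, finishing the proof.

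\emph{Main obstacle.} The one delicate ingredient is the construction of $b(x)$: it needs the \emph{solvability} direction of the generalized CRT over a sublattice of the moduli---i.e., that aggregate consistency is sufficient, not merely necessary, for a common residue preimage to exist---rather than only the reconstruction formula (\ref{crtpoly}). Collapsing the moduli $\{m_j(x):j\in T\}$ into the single modulus $\mbox{lcm}(\{m_j(x):j\in T\})$ via the distributive identity reduces this to the elementary two-modulus case; everything else (the floor/ceiling identity, the monotonicity of the bound in $t$) is routine.
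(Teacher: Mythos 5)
Your proof is correct. Note first that the paper gives no proof of Proposition \ref{p2} at all---it is imported from \cite{sundaram}---so the only comparable argument actually written out is the proof of the degree-weighted analogue, Theorem \ref{theo2}, and your route differs from that one in the erroneous-residue half. The paper's technique is local and prime-power based: it factors $m_{i_1}(x)$, observes that $e_{i_1}(x)$ must fail to contain some prime power $q_l(x)^{k_l}$ with $q_l(x)=p_1(x)$, concludes that $\tilde a_{i_1}(x)$ is inconsistent with every correct residue whose modulus lies in $\mathcal{M}_1$, and then counts those checks. Your argument is global: you bound the number of correct residues that can possibly agree with $\tilde a_{i_k}(x)$ by showing that if too many did, the lcm identity from the first assertion would force the error to vanish. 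Both are valid; the paper's version additionally identifies \emph{which} checks fail (those over a specific $\mathcal{M}_i$), which is what makes it adaptable to the degree-weighted setting, while yours derives the second assertion cleanly as a corollary of the first and needs no prime factorization of the individual modulus. One simplification worth recording: the construction of $b(x)$, which you flag as the delicate step, is avoidable. From $\tilde a_{i_k}(x)\equiv a(x)\bmod \mbox{gcd}(m_{i_k}(x),m_j(x))$ for all $j\in T$ and the distributive law you already have $\tilde a_{i_k}(x)\equiv a(x)\bmod \mbox{gcd}\bigl(m_{i_k}(x),\mbox{lcm}(\{m_j(x):j\in T\})\bigr)$; if $\#(T)\geq L-(d_{minH}-1)$ the inner lcm equals $M(x)$ and the gcd is $m_{i_k}(x)$ itself, so $m_{i_k}(x)$ divides $e_{i_k}(x)$, contradicting $0\neq e_{i_k}(x)$ with $\mbox{deg}(e_{i_k}(x))<\mbox{deg}(m_{i_k}(x))$. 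No appeal to the solvability direction of the generalized CRT is needed.
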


One can see from Proposition \ref{p2} that if there are $\lfloor(d_{minH}-1)/2\rfloor$ or fewer residue errors in a polynomial residue code in Definition \ref{def1}, all the error-free residues can be first located through consistency checks for all pairs of residues $\tilde{a}_i(x)$ for $1\leq i\leq L$, and then $a(x)$ can be accurately reconstructed from these obtained error-free residues.
With the above result, we have the following minimum Hamming distance decoding algorithm.
 \begin{enumerate}
   \item For all pairs of residues in the received residue vector, we perform the consistency checks by (\ref{check}), i.e., for $1\leq i,j\leq L, i\neq j$,
   \begin{equation}
   \tilde{a}_i(x)\equiv \tilde{a}_j(x)\mod d_{ij}(x),
   \end{equation}
   where $d_{ij}(x)=\mbox{gcd}(m_i(x),m_j(x))$.
   \item Let
   \begin{multline}
   \hspace{-0.25cm}\mathcal{Z}=\{\tilde{a}_i(x): \tilde{a}_i(x)\mbox{ appears in at most }\lfloor(d_{minH}-1)/2\rfloor\\\mbox{ failed consistency checks for }1\leq i\leq L\}.
   \end{multline}
 If $\#(\mathcal{Z})=0$, i.e., the cardinality of $\mathcal{Z}$ is zero, we claim that the decoding algorithm fails. Otherwise, go to $3)$.
   \item If all elements in $\mathcal{Z}$ are consistent with each other, we use them to reconstruct $a(x)$ as $\hat{a}(x)$ via the CRT for polynomials by (\ref{crtpoly}). Otherwise, $\hat{a}(x)$ cannot be obtained and we claim that the decoding algorithm fails.
 \end{enumerate}

With the above decoding algorithm, it is easy to see that if there are $\lfloor(d_{minH}-1)/2\rfloor$ or fewer residue errors in a polynomial residue code in Definition \ref{def1}, $a(x)$ can be accurately reconstructed, i.e., $\hat{a}(x)=a(x)$. However, if more than $\lfloor(d_{minH}-1)/2\rfloor$ errors have occurred in the residues, the decoding algorithm may fail, i.e., $\hat{a}(x)$ may not be reconstructed, or even though $a(x)$ can be reconstructed as $\hat{a}(x)$, $\hat{a}(x)=a(x)$ may not hold.

Recently, different from the Hamming distance above, a new type of distance called degree-weighted distance for polynomial residue codes is defined, and a coding framework for polynomial residue codes with pairwise coprime moduli has been developed accordingly in \cite{polyyu,polyyu2}. Based on this newly defined degree-weighted distance, we study polynomial residue codes with non-pairwise coprime moduli in the next section of this paper.

\section{Minimum Degree-Weighted Distance Decoding for Polynomial Residue Codes}\label{sec3}
In this section, we first obtain the minimum degree-weighted distance of a polynomial residue code with non-pairwise coprime moduli, and then based on this, the decoding algorithm is also proposed.

In a polynomial residue code in Definition \ref{def1}, for any $a(x)\in\mathcal{S}$, the degree weight of the codeword $\mathbf{a}=(a_1(x),\cdots,a_L(x))$ is defined by
\begin{equation}
w_D(\mathbf{a})=\sum_{i:a_i(x)\neq0}\mbox{deg}(m_i(x)),
\end{equation}
and for any $a(x),b(x)\in\mathcal{S}$, the degree-weighted distance between two codewords $\mathbf{a}$ and $\mathbf{b}$ is defined by
\begin{equation}
w_D(\mathbf{a}-\mathbf{b})=\sum_{i:a_i(x)\neq b_i(x)}\mbox{deg}(m_i(x)).
\end{equation}
Let $d_{minD}$ denote the minimum degree-weighted distance of the code, which is also the smallest degree weight over all nonzero codewords due to the linearity of the code. Then, we have the following result.

\begin{theorem}\label{theo1}
For a polynomial residue code in Definition \ref{def1}, write $M(x)$ in the form (\ref{fenie}). For each $i$, $1\leq i\leq K$, let $\mathcal{M}_i$ be the set of all the moduli that contain the factor $p_i(x)^{t_i}$. Then, the minimum degree-weighted distance of the polynomial residue code is $d_{minD}=\min\{\mbox{deg}(\mathcal{M}_1),\cdots,\mbox{deg}(\mathcal{M}_K)\}$.
\end{theorem}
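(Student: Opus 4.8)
The plan is to translate the degree weight of a codeword into divisibility data, reduce the computation of $d_{minD}$ to a combinatorial maximization over index sets of moduli, and then resolve that maximization using the factorization (\ref{fenie}); the whole argument runs parallel to the proof of Proposition \ref{p1}, with cardinalities replaced by degree sums.

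First I would exploit the isomorphism between $\mathcal{S}$ and residue vectors recalled in Section \ref{sec2}: for $a(x)\in\mathcal{S}$, the residue $a_i(x)=|a(x)|_{m_i(x)}$ equals $0$ if and only if $m_i(x)\mid a(x)$. Hence, if $\mathbf{a}$ is a nonzero codeword and $S\subseteq\{1,\dots,L\}$ is the set of indices at which $\mathbf{a}$ is zero, then
\begin{equation}\label{eq:redux}
w_D(\mathbf{a})=\mbox{deg}(\mathcal{M})-\sum_{i\in S}\mbox{deg}(m_i(x)),
\end{equation}
where $\mbox{deg}(\mathcal{M})=\sum_{i=1}^{L}\mbox{deg}(m_i(x))$. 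Since $d_{minD}$ is the smallest degree weight over all nonzero codewords, computing it is equivalent to maximizing $\sum_{i\in S}\mbox{deg}(m_i(x))$ over all index sets $S$ for which there exists a nonzero $a(x)$ with $\mbox{deg}(a(x))<\mbox{deg}(M(x))$ that is divisible by $m_i(x)$ for every $i\in S$.

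Next I would characterize the admissible index sets. The condition ``$m_i(x)\mid a(x)$ for all $i\in S$'' is the same as $\mbox{lcm}\{m_i(x):i\in S\}\mid a(x)$, and $\mbox{lcm}\{m_i(x):i\in S\}$ always divides $M(x)=\mbox{lcm}(\mathcal{M})$. Therefore a nonzero admissible $a(x)$ exists if and only if $\mbox{lcm}\{m_i(x):i\in S\}\neq M(x)$: when $\mbox{lcm}\{m_i(x):i\in S\}$ is a proper divisor of $M(x)$, the choice $a(x)=\mbox{lcm}\{m_i(x):i\in S\}$ is nonzero, has degree less than $\mbox{deg}(M(x))$, and is divisible by each $m_i(x)$, $i\in S$ (so all of $S$, and possibly more, lies among its zero residues); conversely, $\mbox{lcm}\{m_i(x):i\in S\}=M(x)$ forces $M(x)\mid a(x)$, hence $a(x)=0$. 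Writing $M(x)$ as in (\ref{fenie}) and recalling that every modulus divides $M(x)$, the exponent of $p_j(x)$ in $\mbox{lcm}\{m_i(x):i\in S\}$ is the maximum over $i\in S$ of the exponent of $p_j(x)$ in $m_i(x)$, and this maximum equals $t_j$ precisely when some modulus indexed in $S$ contains the factor $p_j(x)^{t_j}$, i.e. when $S\cap\mathcal{M}_j\neq\emptyset$. Consequently $\mbox{lcm}\{m_i(x):i\in S\}\neq M(x)$ if and only if $S\cap\mathcal{M}_j=\emptyset$ for some $j$, that is, $S\subseteq\{1,\dots,L\}\setminus\mathcal{M}_j$ for some $j\in\{1,\dots,K\}$.

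Finally I would carry out the maximization and assemble the result. For a fixed $j$, among all index sets contained in $\{1,\dots,L\}\setminus\mathcal{M}_j$ the degree sum in (\ref{eq:redux}) is maximized by $S=\{1,\dots,L\}\setminus\mathcal{M}_j$, giving the value $\mbox{deg}(\mathcal{M})-\mbox{deg}(\mathcal{M}_j)$; optimizing over $j$ yields $\mbox{deg}(\mathcal{M})-\min_{1\le j\le K}\mbox{deg}(\mathcal{M}_j)$. Plugging this back into (\ref{eq:redux}) gives $d_{minD}=\min\{\mbox{deg}(\mathcal{M}_1),\dots,\mbox{deg}(\mathcal{M}_K)\}$, as claimed. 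The steps deserving the most care are: (i) the equivalence between the existence of a nonzero admissible $a(x)$ and $\mbox{lcm}\{m_i(x):i\in S\}$ being a \emph{proper} divisor of $M(x)$, together with the fact that each $\mathcal{M}_j$ is nonempty (because the exponent of $p_j(x)$ in $M(x)=\mbox{lcm}(\mathcal{M})$ is exactly $t_j$, so at least one modulus attains that exponent), which makes both the minimum and the optimal $S$ well defined; and (ii) keeping the two bounds on $d_{minD}$ separate, the upper bound coming from the explicit construction $a(x)=\mbox{lcm}\{m_i(x):i\in S\}$ and the lower bound following by applying the divisibility observation in reverse to an arbitrary nonzero codeword.
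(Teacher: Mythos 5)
Your proposal is correct and follows essentially the same route as the paper: both arguments rest on the observation that a residue vanishes exactly when the modulus divides $a(x)$, reduce the problem to maximizing the total degree of a subset of moduli whose lcm is a proper divisor of $M(x)$ (the paper's equation for $\max_{\mathcal{U}}\mbox{deg}(\mathcal{U})$ is precisely your combinatorial maximization), and exhibit the extremal codeword $\mbox{lcm}(\mathcal{M}\backslash\mathcal{M}_j)$ for the minimizing $j$. The only difference is presentational: the paper proves the lower bound by contradiction while you argue it directly, and you spell out the prime-power characterization of when the lcm equals $M(x)$ that the paper leaves implicit.
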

\begin{proof}
Let $\mathcal{U}$ be any subset of $\mathcal{M}$ satisfying $\mbox{deg}(\mbox{lcm}(\mathcal{U}))<\mbox{deg}(M(x))$. Then, there must exist at least one $\mathcal{M}_i$ for $1\leq i\leq K$ such that $\mathcal{M}_i\bigcap\mathcal{U}=\mathcal{\varnothing}$, where $\mathcal{\varnothing}$ is the empty set. Therefore, we have
\begin{equation}\label{xiya}
\max\limits_{\mathcal{U}\subset\mathcal{M}}\mbox{deg}(\mathcal{U})=\mbox{deg}(\mathcal{M})-\min\{\mbox{deg}(\mathcal{M}_1),\cdots,\mbox{deg}(\mathcal{M}_K)\}.
\end{equation}
For any nonzero $a(x)\in\mathcal{S}$, assume that its residue vector $\mathbf{a}$ has degree weight
$w_D(\mathbf{a})<\min\{\mbox{deg}(\mathcal{M}_1),\cdots,\mbox{deg}(\mathcal{M}_K)\}$.
Define $\mathcal{K}=\{m_i(x):a_i(x)=0 \mbox{ for }1\leq i\leq L\}$. Then, we have
\begin{equation}\label{maomao}
\mbox{deg}(\mathcal{K})>\mbox{deg}(\mathcal{M})-\min\{\mbox{deg}(\mathcal{M}_1),\cdots,\mbox{deg}(\mathcal{M}_K)\}.
\end{equation}
We can write $a(x)$ as $a(x)=c(x)d(x)$, where $c(x)\neq0$ and $d(x)=\mbox{lcm}(\mathcal{K})$. If $\mbox{deg}(d(x))<\mbox{deg}(M(x))$, from (\ref{xiya}) we get $\mbox{deg}(\mathcal{K})\leq\mbox{deg}(\mathcal{M})-\min\{\mbox{deg}(\mathcal{M}_1),\cdots,\mbox{deg}(\mathcal{M}_K)\}$, which is in contradiction with (\ref{maomao}). Therefore, $\mbox{deg}(d(x))=\mbox{deg}(M(x))$. Since $\mathcal{K}\subset\mathcal{M}$, $M(x)$ is divisible by $d(x)$. Moreover, since $\mbox{deg}(d(x))=\mbox{deg}(M(x))$ and $d(x),M(x)$ are monic, we have $d(x)=M(x)$.
Then, from $a(x)=c(x)d(x)$ and $c(x)\neq0$, we have $\mbox{deg}(a(x))\geq\mbox{deg}(M(x))$, which is impossible since $a(x)\in\mathcal{S}$. We thus have $w_D(\mathbf{a})\geq \min\{\mbox{deg}(\mathcal{M}_1),\cdots,\mbox{deg}(\mathcal{M}_K)\}$ for any nonzero $a(x)\in\mathcal{S}$.

We next show that there exists a codeword with exactly the degree weight $\min\{\mbox{deg}(\mathcal{M}_1),\cdots,\mbox{deg}(\mathcal{M}_K)\}$. Without loss of generality, assume that $\min\{\mbox{deg}(\mathcal{M}_1),\cdots,\mbox{deg}(\mathcal{M}_K)\}=\mbox{deg}(\mathcal{M}_1)$. Then, let $a(x)=\mbox{lcm}(\mathcal{M}\backslash\mathcal{M}_1)$, where $\mathcal{M}\backslash\mathcal{M}_1$ denotes the complement of $\mathcal{M}_1$ with respect to $\mathcal{M}$. Since $a(x)$ is not divisible by $p_1(x)^{t_1}$, we can obtain that $\mbox{deg}(a(x))<\mbox{deg}(M(x))$, i.e., $a(x)\in\mathcal{S}$, and that the residues corresponding to the moduli in $\mathcal{M}_1$ are nonzero, while the other residues are equal to zero. So, this codeword has the degree weight $\min\{\mbox{deg}(\mathcal{M}_1),\cdots,\mbox{deg}(\mathcal{M}_K)\}$. Thus, the minimum degree-weighted distance of the code is $\min\{\mbox{deg}(\mathcal{M}_1),\cdots,\mbox{deg}(\mathcal{M}_K)\}$, and we have completed the proof.
\end{proof}

\begin{theorem}\label{theo2}
For a polynomial residue code in Definition \ref{def1}, the lcm of any subset $\mathcal{V}$ of $\mathcal{M}$ equals $M(x)$, if $\mbox{deg}(\mathcal{V})\geq\mbox{deg}(\mathcal{M})-(d_{minD}-1)$. For a residue $\tilde{a}_i(x)$ in the received residue vector, we define the failed consistency check degree of $\tilde{a}_i(x)$ as
\begin{equation}\label{deggg}
\begin{split}
C(\tilde{a}_i(x))&=\mbox{deg}(\{m_j(x): \tilde{a}_j(x) \mbox{ appears in a failed} \\
&\hspace{-0.5cm}\mbox{consistency check with }\tilde{a}_i(x) \mbox{ for } 1\leq j\leq L\}).
\end{split}
\end{equation}
Then, if only $t$ errors $e_{i_1}(x),\cdots,e_{i_t}(x)$ for $\{i_1,\cdots,i_t\}\subset\{1,\cdots,L\}$ satisfying
\begin{equation}\label{cheeee}
\sum_{l=1}^{t}\mbox{deg}(m_{i_l}(x))\leq\lfloor(d_{minD}-1)/2\rfloor
\end{equation}
have occurred in the residues, the failed consistency check degree of each erroneous residue is at least $\lceil(d_{minD}-1)/2\rceil+1$, and the failed consistency check degree of each correct residue is at most $\lfloor(d_{minD}-1)/2\rfloor$.
\end{theorem}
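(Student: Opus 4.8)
The plan is to prove the three assertions in sequence, using throughout the formula $d_{minD}=\min\{\mbox{deg}(\mathcal{M}_1),\ldots,\mbox{deg}(\mathcal{M}_K)\}$ established in Theorem~\ref{theo1}. For the lcm assertion, given $\mathcal{V}\subseteq\mathcal{M}$ with $\mbox{deg}(\mathcal{V})\geq\mbox{deg}(\mathcal{M})-(d_{minD}-1)$, I would pass to the complement $\mathcal{M}\setminus\mathcal{V}$, whose degree sum is $\mbox{deg}(\mathcal{M})-\mbox{deg}(\mathcal{V})\leq d_{minD}-1$; since $d_{minD}-1<\mbox{deg}(\mathcal{M}_r)$ for each $r$ and degree sums are monotone under inclusion, no $\mathcal{M}_r$ can be contained in $\mathcal{M}\setminus\mathcal{V}$, so $\mathcal{V}$ contains a modulus divisible by $p_r(x)^{t_r}$ and hence $p_r(x)^{t_r}\mid\mbox{lcm}(\mathcal{V})$; ranging over all $r$ gives $M(x)\mid\mbox{lcm}(\mathcal{V})$, while $\mbox{lcm}(\mathcal{V})\mid M(x)$ is automatic, so $\mbox{lcm}(\mathcal{V})=M(x)$ by monicity. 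For the bound on correct residues, I would use that two error-free residues always satisfy their consistency check~(\ref{check}); therefore any residue failing a check with a correct residue $\tilde{a}_j(x)$ must be one of the $t$ erroneous residues, and so by~(\ref{cheeee}), $C(\tilde{a}_j(x))\leq\sum_{l=1}^{t}\mbox{deg}(m_{i_l}(x))\leq\lfloor(d_{minD}-1)/2\rfloor$.

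The substantive step is the bound on an erroneous residue $\tilde{a}_{i_l}(x)=a_{i_l}(x)+e_{i_l}(x)$ with $e_{i_l}(x)\neq0$. I would let $\mathcal{P}$ be the set of moduli $m_j(x)$, $j\neq i_l$, whose residues pass the consistency check with $\tilde{a}_{i_l}(x)$; since every such $m_j(x)$ either passes or fails that check, $C(\tilde{a}_{i_l}(x))=\mbox{deg}(\mathcal{M})-\mbox{deg}(m_{i_l}(x))-\mbox{deg}(\mathcal{P})$, so it suffices to bound $\mbox{deg}(\mathcal{P})$ from above. A correct residue $a_j(x)$ passes the check with $\tilde{a}_{i_l}(x)$ exactly when $d_{i_l,j}(x)\mid e_{i_l}(x)$ (because $a_j(x)\equiv a_{i_l}(x)\bmod d_{i_l,j}(x)$), equivalently, since $d_{i_l,j}(x)\mid m_{i_l}(x)$, when $d_{i_l,j}(x)\mid g(x)$ where $g(x)=\mbox{gcd}(m_{i_l}(x),e_{i_l}(x))$. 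Since $e_{i_l}(x)\neq0$ and $\mbox{deg}(e_{i_l}(x))<\mbox{deg}(m_{i_l}(x))$, $g(x)$ is a proper divisor of $m_{i_l}(x)$, so $m_{i_l}(x)/g(x)$ is nonconstant and divisible by some irreducible $p_{r_0}(x)$ among $p_1(x),\ldots,p_K(x)$. Comparing $p_{r_0}(x)$-adic valuations, $d_{i_l,j}(x)\mid g(x)$ forces the valuation of $m_j(x)$ at $p_{r_0}(x)$ to be strictly below that of $m_{i_l}(x)$, hence strictly below $t_{r_0}$, so $m_j(x)\notin\mathcal{M}_{r_0}$.

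Hence every correct modulus of $\mathcal{P}$ lies in $\mathcal{M}\setminus\mathcal{M}_{r_0}$, while the erroneous moduli of $\mathcal{P}$ form a subset of the $t$ error positions not containing $i_l$, so by~(\ref{cheeee}) they contribute total degree at most $\lfloor(d_{minD}-1)/2\rfloor-\mbox{deg}(m_{i_l}(x))$. Together with $\mbox{deg}(\mathcal{M}\setminus\mathcal{M}_{r_0})=\mbox{deg}(\mathcal{M})-\mbox{deg}(\mathcal{M}_{r_0})\leq\mbox{deg}(\mathcal{M})-d_{minD}$ (Theorem~\ref{theo1}), this gives $\mbox{deg}(\mathcal{P})\leq\mbox{deg}(\mathcal{M})-d_{minD}+\lfloor(d_{minD}-1)/2\rfloor-\mbox{deg}(m_{i_l}(x))$, and substituting back yields $C(\tilde{a}_{i_l}(x))\geq d_{minD}-\lfloor(d_{minD}-1)/2\rfloor=\lceil(d_{minD}-1)/2\rceil+1$. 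I expect this erroneous-residue step to be the main obstacle: one must verify that $g(x)\neq m_{i_l}(x)$ so that a suitable irreducible factor $p_{r_0}(x)$ exists (this is the only place $\mbox{deg}(e_{i_l}(x))<\mbox{deg}(m_{i_l}(x))$ is used), carry out the valuation comparison at $p_{r_0}(x)$ carefully, and invoke $\mbox{deg}(\mathcal{M}_{r_0})\geq d_{minD}$; the other two assertions are routine.
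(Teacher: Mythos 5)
Your proposal is correct and follows essentially the same route as the paper's proof: both arguments hinge on exhibiting an irreducible power $p_{r_0}(x)^{k}$ dividing $m_{i_l}(x)$ whose full multiplicity does not divide $e_{i_l}(x)$ (you extract it from $\mbox{gcd}(m_{i_l}(x),e_{i_l}(x))$, the paper from the explicit factorization of $m_{i_l}(x)$), concluding that the erroneous residue fails its consistency check against every correct residue over $\mathcal{M}_{r_0}$, and then counting degrees via $\mbox{deg}(\mathcal{M}_{r_0})\geq d_{minD}$ and (\ref{cheeee}). Your complement count of the passing set $\mathcal{P}$ simply repackages the paper's two-case analysis (according to whether $m_{i_l}(x)\in\mathcal{M}_{r_0}$ or not) into a single computation, and the other two assertions are handled identically.
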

\begin{proof}
According to (\ref{xiya}) in the proof of Theorem \ref{theo1}, it is easy to see that the lcm of any subset $\mathcal{V}$ of $\mathcal{M}$ equals $M(x)$, if $\mbox{deg}(\mathcal{V})\geq\mbox{deg}(\mathcal{M})-(d_{minD}-1)$. By recalling the definitions of $d_1,\cdots,d_K$ in Proposition \ref{p1} and $\mathcal{M}_1,\cdots,\mathcal{M}_K$ in Theorem \ref{theo1}, we know $\#(\mathcal{M}_i)=d_i$ for $1\leq i\leq K$. For each residue error, say $e_{i_1}(x)$, similar to (\ref{fenie}), we write $m_{i_1}(x)$ in the form
\begin{equation}
m_{i_1}(x)=q_1(x)^{k_1}q_2(x)^{k_2}\cdots q_T(x)^{k_T},
\end{equation}
where $\{q_1(x),\cdots,q_T(x)\}\subset\{p_1(x),\cdots,p_K(x)\}$ are pairwise coprime, monic and irreducible polynomials, $k_i$ are positive integers, and $k_i\leq t_j$ if $q_i(x)=p_j(x)$. Since $\mbox{deg}(e_{i_1}(x))<\mbox{deg}(m_{i_1}(x))$, $e_{i_1}(x)$ does not contain at least one of $q_i(x)^{k_i}$ for $1\leq i\leq T$.
Without loss of generality, we assume that $e_{i_1}(x)$ does not contain $q_l(x)^{k_l}$ and $q_l(x)=p_1(x)$. Then, since $m_{i_1}(x)$ and each modulus in $\mathcal{M}_1$ have the common factor $p_1(x)^{k_l}$, the erroneous residue $\tilde{a}_{i_1}(x)$ will be inconsistent with a correct residue over $\mathcal{M}_1$.
If $k_l=t_1$, we know $m_{i_1}(x)\in\mathcal{M}_1$, and we then perform the consistency check between $\tilde{a}_{i_1}(x)$ and each of the other $d_1-1$ residues over $\mathcal{M}_1$.
Since there are $t$ residue errors in total, there are at least $d_1-t$ correct residues over $\mathcal{M}_1$, and thereby
$\tilde{a}_{i_1}(x)$ appears in at least $d_1-t$ failed consistency checks over $\mathcal{M}_1$. So, the failed consistency check degree of $\tilde{a}_{i_1}(x)$ is
\begin{equation}
\begin{split}
C(\tilde{a}_1(x))&\geq\mbox{deg}(\mathcal{M}_1)-\sum_{l=1}^{t}\mbox{deg}(m_{i_l}(x))\\
&\geq d_{minD}-\lfloor(d_{minD}-1)/2\rfloor\\
&=\lceil(d_{minD}-1)/2\rceil+1.
\end{split}
\end{equation}
If $k_l<t_1$, we know $m_{i_1}(x)\notin\mathcal{M}_1$, and we then perform the consistency check between $\tilde{a}_{i_1}(x)$ and each of the $d_1$ residues over $\mathcal{M}_1$. Due to $m_{i_1}(x)\notin\mathcal{M}_1$, the erroneous residue $\tilde{a}_{i_1}(x)$ is not over $\mathcal{M}_1$.
Since there are $t$ residue errors in total, there are at least $d_1-(t-1)$ correct residues over $\mathcal{M}_1$, and thereby
$\tilde{a}_{i_1}(x)$ appears in at least $d_1-t+1$ failed consistency checks over $\mathcal{M}_1$. So, the failed consistency check degree of $\tilde{a}_{i_1}(x)$ is $C(\tilde{a}_1(x))\geq\mbox{deg}(\mathcal{M}_1)-\sum_{l=2}^{t}\mbox{deg}(m_{i_l}(x))>\lceil(d_{minD}-1)/2\rceil+1$.
This analysis holds for each of the erroneous residues, and thus the failed consistency check degree of each erroneous residue is at least $\lceil(d_{minD}-1)/2\rceil+1$. Next, since a correct residue appears in a failed consistency check only if it is being checked with an erroneous residue, the failed consistency check degree of each correct residue is at most $\lfloor(d_{minD}-1)/2\rfloor$ from (\ref{cheeee}). This completes the proof.
\end{proof}

According to the latter part of Theorem \ref{theo2}, if residue errors satisfying (\ref{cheeee}) have occurred, all the error-free residues can be first located through consistency checks for all pairs of residues $\tilde{a}_i(x)$ for $1\leq i\leq L$, and then according to the former part of Theorem \ref{theo2}, these error-free residues contain enough information to reconstruct the correct value of $a(x)$ via the CRT for polynomials. Therefore, we give the minimum degree-weighted distance decoding algorithm as follows.
 \begin{enumerate}
   \item For all pairs of residues in the received residue vector, we perform the consistency checks by (\ref{check}), i.e., for $1\leq i,j\leq L, i\neq j$,
   \begin{equation}
   \tilde{a}_i(x)\equiv \tilde{a}_j(x)\mod d_{ij}(x),
   \end{equation}
   where $d_{ij}(x)=\mbox{gcd}(m_i(x),m_j(x))$.
   \item Let
   \begin{multline}
   \hspace{-0.3cm}\mathcal{Z}=\{\tilde{a}_i(x): \tilde{a}_i(x)\mbox{ has failed consistency check degree}\\ \mbox{by (\ref{deggg}) at most }\lfloor(d_{minD}-1)/2\rfloor\mbox{ for }1\leq i\leq L\}.
   \end{multline}
 If $\#(\mathcal{Z})=0$, i.e., the cardinality of $\mathcal{Z}$ is zero, we claim that the decoding algorithm fails. Otherwise, go to $3)$.
   \item If all elements in $\mathcal{Z}$ are consistent with each other, we use them to reconstruct $a(x)$ as $\hat{a}(x)$ via the CRT for polynomials by (\ref{crtpoly}). Otherwise, $\hat{a}(x)$ cannot be obtained and we claim that the decoding algorithm fails.
 \end{enumerate}

By the above minimum degree-weighted distance decoding algorithm, if residue errors satisfying (\ref{cheeee}) have occurred, $a(x)$ can be accurately reconstructed, i.e., $\hat{a}(x)=a(x)$. We next present two examples to show that none of the minimum Hamming distance decoding proposed in \cite{sundaram} and the minimum degree-weighted distance decoding proposed in this paper is absolutely stronger than the other.

\begin{example}
Let $\mathbb{F}=\mathbb{R}$ be the field of real numbers, and $m_1(x)=(x+1)^2(x+2)^2(x+3)^5, m_2(x)=(x+1)^2(x+2)(x+3)^5(x+4)^2, m_3(x)=(x+2)^2(x+3)^5(x+4)^2, m_4(x)=(x+1)^2(x+2)^2(x+4)^2, m_5(x)=(x+1)(x+2)^2(x+3)(x+4)$. We then have $d_{minH}=3$ and $d_{minD}=25$. Considering the two decoders in Proposition \ref{p2} and Theorem \ref{theo2}, we observe:
\begin{itemize}
  \item The minimum Hamming distance decoding corrects a single error occurring in an arbitrary residue.
  \item The minimum degree-weighted distance decoding also corrects a single error occurring in an arbitrary residue, and in addition, it corrects two errors occurring in the fourth and fifth residues.
\end{itemize}
\end{example}

\begin{example}
Let $\mathbb{F}=\mathbb{R}$ be the field of real numbers, and $m_1(x)=(x+1)^3(x+3)^7(x+4)^2, m_2(x)=(x+1)^3(x+2)(x+3), m_3(x)=(x+2)^2(x+3)^7(x+4)^2, m_4(x)=(x+1)^3(x+2)^2(x+4)^2, m_5(x)=(x+1)(x+2)^2(x+3)^7(x+4)$. We then have $d_{minH}=3$ and $d_{minD}=24$. Considering the two decoders in Proposition \ref{p2} and Theorem \ref{theo2}, we observe:
\begin{itemize}
  \item The minimum Hamming distance decoding corrects a single error occurring in an arbitrary residue.
  \item The minimum degree-weighted distance decoding corrects a single error occurring in anyone of the last $4$ residues, but not in the first residue.
\end{itemize}
\end{example}

\section{Conclusion}
In this paper, we investigated the minimum degree-weighted distance decoding for polynomial residue codes with non-pairwise coprime moduli, which is sometimes but not absolutely stronger than the traditional minimum Hamming distance decoding, and it also provides a new perspective on studying the codes.

\end{document}